\long\def\symbolfootnote[#1]#2{\begingroup%
\def\thefootnote{\monster \fnsymbol{footnote}}\footnote[#1]{#2}\endgroup} 
\newcommand{\BEQA}{\begin{eqnarray}}
\newcommand{\EEQA}{\end{eqnarray}}
\newcommand{\figSize}{3in}
\newcommand{\figSpace}{\vspace{-0.2in}}
\newcommand{\figCaptionSpace}{\vspace{-0.1in}}
\newcommand{\sysState}{{\bf X}}
\newcommand{\D}[2]{\frac{\partial #1}{\partial #2}}
\newtheorem{theorem}{Theorem}
\newtheorem{lemma}[theorem]{Lemma}
\newenvironment{definition}[1][Definition]{\begin{trivlist}
\item[\hskip \labelsep {\bfseries #1}]}{\end{trivlist}}
\begin{document}

\title{Designing ISP-friendly Peer-to-Peer Networks\\ Using Game-based Control}


\author{\IEEEauthorblockN{Vinith Reddy\IEEEauthorrefmark{1},
Younghoon Kim\IEEEauthorrefmark{2},
Srinivas Shakkottai\IEEEauthorrefmark{1} and
A.L.Narasimha Reddy\IEEEauthorrefmark{1}}
\IEEEauthorblockA{\IEEEauthorrefmark{1}Dept. of ECE, Texas A\&M University\\
Email: \{vinith\_reddy, sshakkot, reddy\}@tamu.edu}
\IEEEauthorblockA{\IEEEauthorrefmark{2}Dept. of CS, Korea Advanced Institute of Science and Technology\\ 
Email: kyhoon@gmail.com}}


\maketitle

\begin{abstract} The rapid growth of peer-to-peer (P2P) networks in the past few years has brought with it increases in transit cost to Internet Service Providers (ISPs), as peers exchange large amounts of traffic across ISP boundaries.  This ISP oblivious behavior has resulted in misalignment of incentives between P2P networks---that seek to maximize user quality---and ISPs---that would seek to minimize costs.  Can we design a P2P overlay that accounts for both ISP costs as well as quality of service, and attains a desired tradeoff between the two?  
We design a system, which we call MultiTrack, that consists of an overlay of multiple \emph{mTrackers} whose purpose is to align these goals. mTrackers split demand from users among different ISP domains while trying to minimize their individual costs (delay plus transit cost) in their ISP domain. We design the signals in this overlay of mTrackers in such a way that potentially competitive individual optimization goals are aligned across the mTrackers.  The mTrackers are also capable of doing admission control in order to ensure that users who are from different ISP domains have a fair chance of being admitted into the system, while keeping costs in check. We prove analytically that our system is stable and achieves maximum utility with minimum cost. Our design decisions and control algorithms are validated by Matlab and ns-2 simulations.
\end{abstract}



\section{Introduction}
The past few years have seen the rapid growth of content distribution over the Internet, particularly using  peer-to-peer (P2P) networks.  Recent studies estimate that 35-90\% of bandwidth is consumed by P2P file-sharing
applications, both at the edges and even within the core~\cite{fraleigh:traffic,gummadi:traffic}.   The use of P2P networks for media delivery is expected to grow still further, with the proliferation of legal applications (\emph{e.g.} Pando Networks \cite{pando}) that use P2P as a core technology.

\begin{figure}[htbp]
\vspace{-0.1in}
\begin{center}
\includegraphics[width=2.5in]{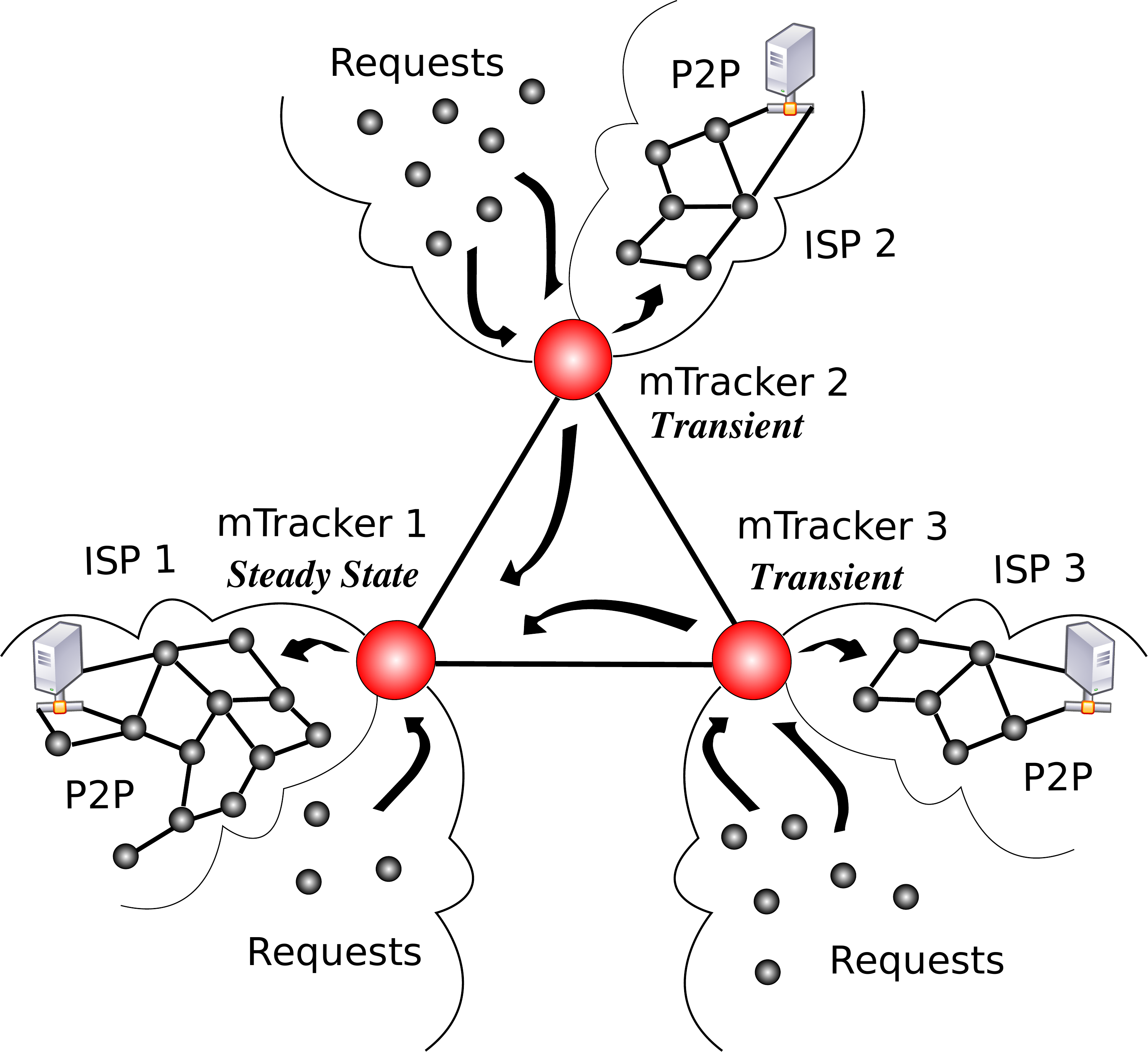}
\figCaptionSpace
 \caption{The MultiTrack architecture.  Multiple trackers, each following individual optimizations, achieve an optimal delay-cost tradeoff.}
\label{fig:multitrack}
\end{center}
\figSpace
\end{figure}

While most P2P systems today possess some form of network resource-awareness, and attempt to optimally utilize the system resources, they are largely agnostic to Internet Service Providers' (ISP) concerns such as traffic management and costs.  This ISP-oblivious nature of P2P networks has hampered the ability of system participants to correctly align incentives.  Indeed, the recent conflicts between ISPs and content providers, as well as efforts by some ISPs such as Comcast to limit P2P traffic on their networks~\cite{comcast:blocking}, speak in part to an inability to align interests correctly.  Such conflicts are particularly critical as P2P becomes an increasingly prevalent form of content distribution~\cite{stats}.

A traditional BitTorrent system \cite{coh03} has elements called \emph{Trackers} whose main purpose is to enable peers to find each other.  The BitTorrent Tracker randomly assigns a new (entering) user a set of peers that are already in the system to communicate with.  This system has the disadvantage that if peers who are assigned to help each other are in the domains of different ISPs, they would cause significant transit costs to the ISPs due to the inter-ISP traffic that they generate.  However, if costs are reduced by forcing traffic to be local, then the delay performance of the system could suffer.  Recent work such as \cite{anja,p4p:sigcomm,chobus08} has focused on cost in terms of load balancing and localizing traffic, and developed heuristics to attain a certain quality of service (QoS).  For example, P4P \cite{p4p:sigcomm} develops a framework to achieve minimum cost (optimal load balancing) among ISP links, but its BitTorrent implementation utilizes the heuristic that $30\%$ of peers declared to each requesting user should be drawn from ``far away ISPs'' in order to attain a good QoS. 

This leads us to the fundamental question that we attempt to answer in this paper: \emph{Can we develop a distributed delay and cost optimal P2P architecture?}  We focus on developing a provably optimal price-assisted architecture called MultiTrack, that would be aware of the interaction between delay and cost.  The idea is to understand that while the resources available with peers in different ISP domains should certainly be used, such usage comes at a price.  The system must be able to determine the marginal gain in performance for a marginal increase in cost.  It would then be able to locate the optimal point at which to operate. 

The conceptual system\footnote{We presented some basic ideas on the system as a poster \cite{RedKim09}.} is illustrated in Figure \ref{fig:multitrack}.  The system is managed by a set of \emph{mTrackers}.  Each mTracker is associated with a particular ISP domain.  The mTrackers are similar to the Trackers in BitTorrent \cite{coh03}, in that their main purpose is to enable peers to find each other.  However, unlike BitTorrent, the mTrackers in MultiTrack form an overlay network among themselves.  The purpose of the overlay network is to provide  multi-dimensional actions to the mTrackers.  In Figure  \ref{fig:multitrack}, mTracker 1 is in steady state (wherein the demand on the mTracker is less than the available capacity \cite{p2p:gdv}), which implies that it has spare capacity to serve requests from other mTrackers.  Consider mTracker 2 which is in transient state (wherein the demand on the mTracker is more than the available capacity \cite{p2p:gdv}).  When a request arrives, it can either assign the requester to its own domain at essentially zero cost, or can forward the user to mTracker 1 and incur a cost for doing so.   However, the delay incurred by forwarded users would be less as mTracker 1 has higher capacity.  Thus, mTracker 2 can trade-off cost versus delay performance by forwarding some part of its demand.

Each mTracker uses \emph{price assisted decision making} by utilizing dynamics that consider the marginal payoff of forwarding traffic to that of retaining traffic in the same domain as the mTracker.  Several such rational dynamics have been developed in the field of \emph{game theory} that studies the behavior of selfish users.  
We present our system model with its simplifying assumptions in Section~\ref{sec:system}.  We then design a system in which the actions of these mTrackers, each seeking to maximize their own payoffs, actually result in ensuring lowest cost of the system as a whole.  The scheme, presented in Section~\ref{sec:mTracker}, involves \emph{implicit learning} of capacities through probing and backoff through a rational control scheme known as replicator dynamics \cite{hofsig98,broneu50}. We present a game theoretic framework for our system in Section~\ref{sec:background} and show using Lyapunov techniques that the vector of split probabilities converges to a provably optimal state wherein the total cost in terms of delay and traffic-exchange is minimized.  Further, this state is a Wardrop equilibrium \cite{Wardrop52}.

We then consider a subsidiary problem of achieving fair division of resources among different mTrackers through admission control in Section~\ref{sec:mTrAdmission}.  The objective here is to ensure that some level of fairness is maintained among the users in different mTracker domains, while at the same time ensuring that the costs in the system are not too high.  Admission control implies that not all users in all domains would be allowed to enter the system, but it should be implemented in a manner that is fair to users in different mTracker domains.  The mTracker takes admission control decisions based on the marginal disutility caused by users to the system.  Users interested in the file would approach the mTracker that would decide whether or not to admit the user into the system.  We show that our mTracker admission control optimally achieves fairness amongst users, while maintaining low system cost.  Note that switching off admission control would still imply that the total system cost is minimized by mTrackers, but this could be high if the offered load were high. 

We simulate our system both using Matlab simulations in Section~\ref{sec:matlab_sims} to validate our analysis, as well as ns-2 simulations in Section \ref{sec:ns2_sims} to show a plausible implementation of the system as a whole.  The simulations strongly support our architectural decisions.  We conclude with ideas on the future in Section \ref{sec:conclusion}.  

\section{Related Work}\label{sec:related}

There has been much recent work on P2P systems and traffic management, and we provide a discussion of work that is closely related to our problem.  Fluid models of P2P systems, and the multi-phase (transient/steady state) behavior has been developed in \cite{p2p:gdv,qiusri04}.  The results show how supply of a file correlates with its demand, and it is essentially transient delays that dominate.     

Traffic management and load balancing have become important as P2P networks grow in size.  There has been work on traffic management for streaming traffic \cite{setapo07,sigm:chiang,sigm:chen}.  In particular, \cite{setapo07} focuses on server-assisted streaming, while \cite{sigm:chiang,sigm:chen} aim at fair resource allocation to peers using optimization-decomposition.  

Closest to our setting is work such as \cite{anja,p4p:sigcomm,chobus08}, that study the need to localize traffic within ISP domains.  In \cite{anja}, the focus is on allowing only local communications and optimizing the performance by careful peer selection, while \cite{p4p:sigcomm} develops an optimization framework to balance load across ISPs using cost information.  A different approach is taken in \cite{chobus08}, wherein peers are selected based on inputs on nearness provided by CDNs (if a CDN directs two peers to the same cache, they are probably near by).  

Pricing and market mechanisms for P2P systems are of significant interest, and work such as \cite{market:conext08} use ideas of currency exchange between peers that can be used to facilitate file transfers.  The system we design uses prices between mTrackers that map to real-world costs of traffic exchange, but do not have currency exchanges between peers which still use BitTorrent style bilateral barter.

\section{The MultiTrack System}\label{sec:system}

MultiTrack is a hybrid P2P network architecture similar to \emph{BitTorrent} \cite{coh03, bittorrent} in many ways, and we first review some control elements of BitTorrent.  In the BitTorrent architecture a file is divided into multiple chunks, and there exists at least one  \emph{Tracker} for each file that keeps track of peers that contain the file in its entirety (such peers are called \textit{seeds}) or some chunks of it (such peers are called \textit{downloaders}).  A new peer that wants to download a file needs to first locate a Tracker corresponding to the file.  Information about Trackers for a file (among other information) is contained in .torrent files, which are hosted at free servers.  Thus, the peer downloads the .torrent file, and locates a Tracker using this file.  

When a peer sends a request to a Tracker corresponding to the file it wants, the Tracker returns the addresses of a set of peers (\textit{seeds and downloaders}) that the new peer should contact in order to download the file. The peer then connects to a subset of the given peers and downloads chunks of the file from them. While downloading the file, a peer sends updates to the Tracker about its download status (number of chunks uploaded and downloaded).  Hence, a tracker knows about the state of each peer that is present in its peer cloud (or \textit{swarm}). 

The MultiTrack architecture consists of BitTorrent-like trackers, which we call \textbf{mTrackers}.  We associate one or more mTrackers to each ISP, with each mTracker controlling access to its own peer cloud.  Note that all these mTrackers are identified with the \emph{same} file.  Unlike BitTorrent Trackers, mTrackers are aware of each other and form an overlay network among themselves. Each mTracker consists of two different modules:
\begin{enumerate}  
\item \textbf{Admission control}: Unlike the BitTorrent tracker which has no control over admission decisions of peers, the mTracker can decide whether or not to admit a particular peer into the system. Once admitted, the peer is either served locally or is forwarded to a different mTracker based on the decision taken by the mTracker.
\item \textbf{Traffic management}: This module of the mTracker, takes a decision on whether to forward a new peer into its own peer cloud (at relatively low cost, but possibly poor delay performance) or to another mTracker (at higher cost, but potentially higher performance).
\end{enumerate} 

The rationale behind this architecture is as follows.  At any time, a peer cloud has a capacity associated with it, based on the maximum upload bandwidth of a peer in the cloud and the total number of chunks present at all the peers in the cloud (seeds and downloaders).  In general, a peer-cloud has two phases of operation \cite{p2p:gdv}---a \emph{transient} phase where the available capacity is less than the demand (in other words, not enough peers with a copy of the file), and a \emph{steady state} phase, where the available capacity is greater than the capacity required to satisfy demand.  Thus, a peer cloud can be thought of as a server with changing service capacity.   We balance load among different peer clouds located in different ISPs, taking into account the transit cost associated with traffic exchange. 

We assume \emph{time scale separation} between the two modes--- traffic management and admission control, of the mTracker.  Our assumption is that the capacity of a P2P system remains roughly constant over intervals of time, with capacity changes seen at the end of these time periods.  We divide system dynamics into three time scales:
\begin{enumerate}
\item \textbf{Large:} The capacity of the peer cloud associated with each  mTracker changes at this time scale.
\item \textbf{Medium:} mTrackers take admission control decisions at this time scale.  They could increase or decrease the number of admitted peers based on feedback from the system.  We will study dynamics at this time scale in Section \ref{sec:mTrAdmission}.
\item \textbf{Small:} mTrackers split the demand that they see among the different options (mTrackers visible to them) at this time scale.  Thus, they change the probability of sending peers to their own peer-cloud or to other mTrackers at this time scale.  We study these dynamics in Section \ref{sec:mTracker}.
\end{enumerate}
Note that a medium time unit comprises of many small time units and a large time unit comprises of many medium time units.  The artifice of splitting dynamics into these time scales allows us to design each control loop while assuming that certain system parameters remain constant during the interval.  In the following sections, we present the design and analysis of our different system components.

\section{mTracker: Traffic Management}\label{sec:mTracker}

The objective of the mTracker's \emph{Traffic Management} module is to split the demand that it sees among the different options (other mTrackers, and its own peer cloud) that it has. Since each mTracker is associated with a different ISP domain, it would  like to minimize the cost seen by that ISP, and yet maintain a good delay performance for its users. 

As mentioned in the last section, peer-clouds can be in either \emph{transient} or \emph{steady-state} based on whether the demand seen is greater than or less than the available capacity.  An mTracker in the transient mode would like to offload some of its demand, while mTrackers in the steady-state mode can accept load.  Thus, each mTracker $j$ in the transient mode maintains a split probability vector $\hat{\vec{y}}_j = [\hat{y}^1_j \dots \hat{y}^Q_j]$, where $Q$ is the total number of mTrackers, and some of the $y^i_j$ could be zero.  We assume that the demand seen by mTracker $j$ is a Poisson process of rate $x_j$.  Thus, splitting traffic according to $\hat{\vec{y}}_j$ would produce $Q$ Poisson processes, each with rate $x^i_j\triangleq y^i_j x_j$ ($i=1,...Q$).
  
Now, each mTracker in the steady-state mode can accept traffic from mTrackers that are transient.  It could, of course, prioritize or reserve capacity for its own traffic; we assume here that it does so, and the balance capacity available (in users served per unit time) of this steady state mTracker is $C^i$.  Then the demand seen at each such mTracker $i$ is the sum of Poisson processes that arrive at it, whose rate is simply $\sum_{l=1}^Q x^i_l.$  We assume that delay seen by each peer sent to mTracker $i$ is convex increasing in load, and for illustration use the M/M/1 delay function
\BEQA\label{eqn_mm1_delay}
\frac{1}{C^i-\sum_{l=1}^Q x^i_l}.
\EEQA 
Note that, peers from different transient mTrackers are not allowed to communicate with each other at the steady state mTracker to which they are forwarded. Thus, a peer that is forwarded from one ISP domain to another is only allowed to communicate with peers located in that ISP domain.
 
Now, the steady state mTrackers are disinterested players in the system, and would like to minimize the total delay of the system.   They could charge an additional price that would act as a congestion signal to mTrackers that forward traffic to them.  Such a congestion price should reflect the ill-effect that increasing the load by one mTracker has on the others.  What should such a price look like?  Now, consider the expression
\BEQA
D(z)=\frac{1}{ C^i - z^i},
\EEQA
which is the general form of the delay seen by each user at mTracker $i$.  The elasticity of delay with arrival rate $z_i$
\BEQA\label{eqn:elasticity}
\frac{\partial D(z^i)}{\partial z^i}\frac{z^i}{D(z^i)} = \frac{z^i}{C^i-z^i}.
\EEQA
The elasticity gives the fractional change in delay for a fractional change in load, and can be thought of as the cost of increasing load on the users.  In other words, if the load is increased by any one mTracker, \emph{all} the others would also be hurt by this quantity.  Expressing the above in terms of delay (multiplying by total delay) to ensure that all units are in delay, the elasticity per unit rate per unit time at mTracker $i$ is just
\BEQA\label{eqn:elasticity_cost}
\frac{\sum_{l=1}^Qx^i_l}{( C^i - \sum_{l=1}^Qx^i_l)^2}.
\EEQA
The above quantity represents the ill effect that increasing the load per unit time has on the delay experienced on all users at mTracker $i$.  The delay cost (\ref{eqn_mm1_delay}) is the disutility for using the mTracker, while the congestion cost (\ref{eqn:elasticity_cost}) is the disutility caused to others using the mTracker.  The mTracker can charge this price to each mTracker that forward peers to it.  

Since mTrackers belong to different ISP domains, forwarding demand from one mTracker to the other is not free.  We assume that the transit cost per unit rate of forwarding demand from mTracker $j$ to mTracker $i$ is $p^i_j$.  Thus, the payoff of mTracker $j$ due to forwarding traffic to mTracker $i$ per unit rate per unit time is given by the sum of transit cost $p^i_j$ with the delay cost (\ref{eqn_mm1_delay}) and congestion price (\ref{eqn:elasticity_cost}), which yields a total payoff per unit rate per unit time of
\BEQA
\frac{1}{ C^i - \sum_{l=1}^Qx^i_l} + p^i_j + \frac{\sum_{l=1}^Qx^i_l}{( C^i - \sum_{l=1}^Qx^i_l)^2}.
\EEQA
The mTracker would like as \emph{small} a payoff as possible.

In the next subsections we will develop a population game model for our system, and show how rational dynamics when coupled with the payoff function given above naturally results in minimizing the total system cost (delay cost plus transit cost).  A good reference on population games is \cite{Sandholm01}.

\subsection{MultiTrack Population Game}\label{sec:background}
A \emph{population game} $\mathcal{G}$, with a set $\mathcal{Q}=\{1, ... ,Q\}$ of non-atomic populations of players is defined by the following entities:
\begin{enumerate}
\item a mass, $x_j \quad \forall j \in \mathcal{Q}$,
\item a strategy or action set, $\mathcal{S}_j = \{1, ... , S_j \} \quad \forall j \in \mathcal{Q}$ and
\item a payoff, $F^i_j \quad \forall j \in \mathcal{Q}$ and $\forall i \in \mathcal{S}_j$. 
\end{enumerate}
By a non-atomic population, we mean that the contribution of each member of the population is infinitesimal. 

In the \emph{MultiTrack Game} each mTracker is a player and the options available to each mTracker are other mTrackers' peer cloud or its own peer cloud. Let $\vec{x}= [x_1,\dots x_Q]$ be the total load vector of the system at the small time scale, where $x_j \quad \forall\ j \in \mathcal{Q}$  is the total arrival rate of new peer requests (or mass) at mTracker $j$.  A strategy distribution of an mTracker $j \in \mathcal{Q}$ is a split of its load $x_j$ among different mTrackers including itself, represented as $\vec{x}_j = [x^1_j \dots x^Q_j]$, where $\sum^Q_{i=1}x^i_j = x_j$. 
If a mTracker $j$ is not connected to mTracker $i$ (or if it does not want to use mTracker $i$), then the rate $x^i_j = 0$.
We denote the vector of strategies being used by all the mTrackers as $\sysState = [\vec{x}_1 \dots \vec{x}_Q]$. The vector $\sysState$ represents the state of the system and it changes continuously with time.  Let the space of all possible states of a system for a given load vector be denoted as $ \mathbb X$, i.e $\sysState \in \mathbb X$. 

The payoff (per unit rate per unit time) of forwarding requests from mTracker $j$ to $i$, when the state of the system is $\sysState$ is denoted by $F_j^i(\sysState) \in \mathbb R$ and is assumed to be continuous and differentiable.  As developed above this payoff is 
\BEQA \label{eqn_F_i_j}
F_j^i(\sysState) = \frac{1}{ C^i - \sum_{l=1}^Qx^i_l} + p^i_j + \frac{\sum_{l=1}^Qx^i_l}{( C^i - \sum_{l=1}^Qx^i_l)^2}.
\EEQA
Recall that mTrackers want to keep payoff \emph{as small as possible}.

A commonly used concept in non-cooperative games in the context of infinitesimal players, is the Wardrop equilibrium \cite{Wardrop52}. Consider any strategy distribution $\vec{x}_j=[x^1_j,...,x^{S_j}_j]$.  There would be some elements which are non-zero and others which are zero.  We call the strategies corresponding to the non-zero elements as the \emph{strategies used by population $j$}.
\begin{definition} {\bf 1}
A state ${\bf \hat{X}}$ is a
Wardrop equilibrium if for any population $j \in \mathcal{Q}$,
all strategies being used by the members of $j$ yield the same marginal payoff
to each member of $j$, whereas the marginal payoff that would
be obtained by members of $j$ is higher for all strategies not used by population $j$.
\end{definition}
In the context of our MultiTrack game the above definition of Wardrop equilibrium is characterized by the following relation:
\BEQA
F_j^r( {\bf \hat{X}} ) \leq F_j^i ( {\bf \hat{X}})\ \quad \forall\ r\in \mathcal{\hat{Q}}_j\mbox{ and } i\in \mathcal{Q}
\nonumber
\EEQA
Where $\mathcal{\hat{Q}}_j\subset \mathcal{Q}$ is the set of all mTrackers used by population $j$ in a strategy
distribution ${\bf \hat{\vec{x}}_j}$.

The above concept refers to an \emph{equilibrium condition}; the question arises as to how the system actually arrives at such a state.  One model of population dynamics is \emph{Replicator Dynamics} \cite{hofsig98}.  The rate of increase of ${\dot{x}^i_j}/{x^i_j}$ of the strategy $i$
is a measure of its evolutionary success.  Following ideas of Darwinism,
we may express this success as the difference in fitness $F^i_j({\bf X})$ of the strategy $i$
and the average fitness $\sum_{r=1}^{Q} x^r_j F^r_j({\bf X})/x_j$ of the population $j$.  Then we obtain
\BEQA
\frac{\dot{x}^i_j}{x^i_j}= \mbox{average fitness - fitness of $s$}.\nonumber
\EEQA
Then the dynamics used to describe changes in the mass of population $j$ playing strategy $s$
is given by
\vspace{-0.1in}
\BEQA\label{eqn_RepDyn}
\dot{x}^i_j=x^i_j\Big(\frac{1}{x_j}\sum_{r=1}^{Q} x^r_j F^r_j({\bf X}) - F^i_j({\bf X})\Big).
\EEQA
The above expression implies that a population would increase the mass of a successful strategy and decrease the mass of a
less successful one.  It is called the replicator equation after the tenet ``like begets like''.  Note that the total mass of the population $j$ is constant.  We design our mTracker Traffic Management module around Replicator Dynamics (\ref{eqn_RepDyn}).
  
\subsection{Convergence of mTracker dynamics}

We define the total cost in the system to be the sum of the total system delay plus the total transit cost.  In other words, we have weighted delay costs and transit costs equally when determining their contribution to the system cost.  We could, of course, use any convex combination of the two without any changes to the system design.  Hence using the M/M/1 delay model at each tracker and adding transit costs, the total system cost when the system is in state $\sysState$ is given as:
\BEQA \label{eqn_Lyap}
\mathcal{C}(\sysState) = \sum_{i=1}^Q\left\{ \frac{\sum_{r=1}^Qx^i_r}{ C^i - \sum_{l=1}^Qx^i_l} + \sum_{r=1}^Q p^i_rx^i_r \right\}.
\EEQA
Note that the cost is convex and increasing in the load.  We will show that the above expression acts as a Lyapunov function for the system.
 
\begin{theorem} \label{thm:repStable}
The system of mTrackers that follow \textit{replicator dynamics} with payoffs given by (\ref{eqn_F_i_j}) is globally asymptotically stable.
\end{theorem}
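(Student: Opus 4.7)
The plan is to show that the total system cost $\mathcal{C}(\sysState)$ from (\ref{eqn_Lyap}) serves as a Lyapunov function for the replicator dynamics (\ref{eqn_RepDyn}). The structural fact that makes this work is that MultiTrack is a \emph{potential game}: writing $z^i := \sum_l x^i_l$ and differentiating (\ref{eqn_Lyap}) directly yields
\[
\D{\mathcal{C}}{x^i_j} = \frac{C^i}{(C^i-z^i)^2} + p^i_j = \frac{1}{C^i-z^i} + \frac{z^i}{(C^i-z^i)^2} + p^i_j = F^i_j(\sysState),
\]
so the payoff that mTracker $j$ sees on option $i$ is precisely the marginal system cost of adding flow along $(j,i)$. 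This identity turns every subsequent step into a routine potential-game calculation.

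Given this, I would compute $\dot{\mathcal{C}}$ along trajectories of (\ref{eqn_RepDyn}) and rewrite it in statistical form. Setting $\bar{F}_j := \frac{1}{x_j}\sum_r x^r_j F^r_j$ and $\pi^i_j := x^i_j/x_j$, the chain rule together with $\sum_i x^i_j F^i_j = x_j \bar{F}_j$ collapses the double sum into a variance:
\[
\dot{\mathcal{C}} = \sum_{i,j} F^i_j \dot{x}^i_j = \sum_j \sum_i x^i_j F^i_j(\bar{F}_j - F^i_j) = -\sum_j x_j \,\mathrm{Var}_{\pi_j}\!\bigl(F^{\cdot}_j(\sysState)\bigr) \le 0,
\]
with equality iff, for every population $j$, $F^i_j$ is constant across all strategies $i$ with $x^i_j > 0$, which is exactly the Wardrop condition of Definition~1. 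Because each $z^i/(C^i-z^i)$ is convex in $z^i$ and $z^i$ is linear in $\sysState$, and the transit terms are linear, $\mathcal{C}$ is convex, so its minimizer on the feasible set coincides with the Wardrop/KKT point. Applying LaSalle's invariance principle on the invariant product of simplices $\{\sysState : \sum_i x^i_j = x_j,\ x^i_j \ge 0\}$ then forces every trajectory to converge to the set where $\dot{\mathcal{C}} = 0$, which is the set of Wardrop equilibria, i.e., the global minimizer of $\mathcal{C}$.

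The main obstacle is twofold. First, both $\mathcal{C}$ and $F^i_j$ blow up as $z^i \uparrow C^i$, so I must argue that replicator trajectories stay in the open capacity region where the dynamics are well defined; since $\mathcal{C}$ is non-increasing along the flow and diverges at the capacity boundary, every sublevel set $\{\sysState : \mathcal{C}(\sysState) \le c\}$ is compact and strictly interior, giving forward invariance and a well-posed dynamical system on which LaSalle applies. Second, replicator dynamics leaves each face $\{x^i_j = 0\}$ invariant, so spurious boundary rest points exist; to obtain genuinely \emph{global} asymptotic stability I must restrict to initial conditions with full support and show that any non-Wardrop boundary equilibrium is unstable, which follows because at such a point some unused strategy $i$ has marginal cost strictly below $\bar{F}_j$, so any perturbation placing infinitesimal mass on $i$ immediately triggers the strict descent $\dot{\mathcal{C}} < 0$ that drives the state toward the unique minimizer.
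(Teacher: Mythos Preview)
Your proof is correct and follows essentially the same route as the paper: both use $\mathcal{C}(\sysState)$ as a Lyapunov function via the potential-game identity $\partial\mathcal{C}/\partial x^i_j = F^i_j$, and your variance calculation $\dot{\mathcal{C}} = -\sum_j x_j\,\mathrm{Var}_{\pi_j}(F^{\cdot}_j)$ is exactly the paper's Jensen-inequality step for $f(x)=x^2$ rewritten. You are in fact more careful than the paper about the capacity boundary and the face-invariance of replicator dynamics; the paper glosses over both and only acknowledges the extinction issue afterward, when motivating the switch to BNN dynamics.
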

\begin{proof}
We prove the system stability using Lyapunov Theory with $\mathcal{C}(\sysState)$ defined in (\ref{eqn_Lyap}) as the Lyapunov function.  From (\ref{eqn_F_i_j}) and (\ref{eqn_Lyap}), $\D{\mathcal{C}}{x_j^i} = F_j^i(\sysState)$, hence
\BEQA
\dot{\mathcal{C}}(\sysState) = \sum_{i=1}^Q \sum_{j=1}^Q \D{\mathcal{C}}{x_j^i} \dot{x}_j^i = \sum_{i=1}^Q \sum_{j=1}^Q F_j^i(\sysState)\dot{x}_j^i, \label{eqn:C_dot}
\EEQA
Now, let $\tilde{\mathcal{X}}$ be the set of states such that,
$$\dot{\mathcal{C}}(\sysState) = 0, \forall\  \sysState \in \tilde{\mathcal{X}}$$  
From (\ref{eqn:C_dot}) it is evident that $\dot{\mathcal{C}}(\sysState) = 0$,  if:
\BEQA
F_j^i(\sysState) = 0 \mbox{ or} \hspace{1in} \\
\left(\dot{x}_j^i = 0\right) \Rightarrow \frac{1}{x_j} \sum_{r=1}^Qx_j^r F_j^r = F_j^i \quad \forall\ i,j \in Q
\EEQA
Thus, $\tilde{\mathcal{X}}$ is the set of equilibrium states of replicator dynamics.  We will show that $\dot{\mathcal{C}}(\sysState)<0 \ \forall\  X \notin \tilde{\mathcal{X}}.$
 
From (\ref{eqn:C_dot}) and (\ref{eqn_RepDyn}) we have
\BEQA
\dot{\mathcal{C}}(\sysState) = \sum_{i=1}^Q \sum_{j=1}^Q F_j^i x_j^i\left(\frac{1}{x_j} \sum_{r=1}^Qx_j^r F_j^r - F_j^i \right) \\
= \sum_{j=1}^Q x_j \left(\left(\sum_{i=1}^Q\frac{x_j^i}{x_j} F_j^i\right)^2 - \left(\sum_{i=1}^Q\frac{x_j^i}{x_j}(F_j^i)^2\right)\right)
\EEQA
Since function $f(x) = x^2$ is convex and $\sum_{i=1}^Q \frac{x_j^i}{x_j} =1$, from Jensen's inequality we have, 
$\forall\  X \notin \tilde{\mathcal{X}}$:
\BEQA
\left(\left(\sum_{i=1}^Q\frac{x_j^i}{x_j} F_j^i\right)^2 - \left(\sum_{i=1}^Q\frac{x_j^i}{x_j}(F_j^i)^2\right)\right) & < & 0 \quad \forall\  j \in \mathcal{Q} \nonumber
\EEQA
Thus, $\dot{\mathcal{C}}(\sysState) <  0, \quad \forall\  X \notin \tilde{\mathcal{X}}$ and the system is globally asymptotically stable \cite{khalil96}.
\end{proof}
While replicator dynamics is a simple model, it has a drawback: during the different iterations of replicator dynamics, if the value of $x_j^i$, the rate of forwarding requests from mTracker $j$ to mTracker $i$ becomes zero then it remains zero forever. Thus, a strategy could become extinct when replicator dynamics is used and its stationary point might not be a Wardrop equilibrium.  To avoid this problem, we can use another kind of dynamics called Brown-von Neuman-Nash (BNN) Dynamics, which is described as:
\BEQA 
 \dot{x}^i_q&=&x_q\gamma^i_q-x^i_q\sum_{j=1}^{Q}\gamma_q^j \label{eqn_bnn}\\
\mbox{where, }  \gamma_q^i&=&\max\big\{F^i_q(X) -\frac{1}{x_q}\sum_{i=1}^{Q} x^i_q F^i_q(X),0\big\} \label{eqn_excess}
\EEQA
denotes the excess payoff to strategy $i$ relative to the average payoff in population $q$.  We can show that the system of mTrackers that follow BNN dynamics is globally asymptotically stable in a manner similar to the proof of Theorem~\ref{thm:repStable}.

We have just shown that the total system cost acts as a Lyapunov function for the system.  It should not come as a surprise then, that the cost-minimizing state is a Wardrop equilibrium.  We prove this formally in the next subsection.

\subsection{Cost efficiency of mTrackers}\label{subsec:mTrackCostEff}

In previous work on selfish routing  (\emph{e.g.} \cite{roughgarden00}), it was shown that the Wardrop equilibrium does not result in efficient system performance. This inefficiency is referred to as the \emph{price of anarchy}, and it is primarily due to selfish user-strategies.  However, work on population games \cite{Sandholm01} suggests that carefully devised price signals would indeed result in efficient equilibria.  We show now that the Wardrop equilibrium attained by mTrackers is efficient for the system as a whole.

The objective of our system is to minimize the total cost for a given load vector $\vec{x} = [x_1, \dots, x_Q]$. Here the total cost in the system is $\mathcal{C}(\sysState )$ and is defined in (\ref{eqn_Lyap}). This can be represented as the following constrained minimization problem:
\BEQA\label{eqn_sysPrimal}
\min_{\sysState}&\mathcal{C}({\bf X} ) &\\
\mbox{subject to, } & \sum_{i=1}^Qx_j^i = x_j \quad &\forall \ j \in \mathcal{Q} \label{eqn_sysConstraints}\\
& x_j^i \geq  0 & 
\EEQA
The Lagrange dual associated with the above is
\BEQA\label{eqn_sysDual}
\mathcal{L}(\lambda,\sysState) = \max_{\lambda, h}\min_{\sysState}\bigg(\mathcal{C}(\sysState ) &-& \\
\sum_{j=1}^Q \lambda _j\Big(\sum_{i=1}^Qx_j^i -x_j\Big) &-& \sum_{i=1}^Q \sum_{j=1}^Q h_j^i x_j^i \bigg) \nonumber
\EEQA
where $h_j^i \geq 0$ and $\lambda_j$, $\forall \ i,j, \in \mathcal{Q}$ are the dual variables.  Now the above dual problem gives the following Karush-Kuhn-Tucker first order conditions:
\BEQA
\D{\mathcal{L}}{x_j^i}(\lambda,\sysState ^\star) = 0 &  \forall\  i, j \in \mathcal{Q}  \label{eqn_sysKKT} \\
\mbox{and} \nonumber \\ 
h_j^i x_j^{\star i} = 0 & \forall\  i, j \in \mathcal{Q} \label{eqn_KKT2}
\EEQA
where $\sysState ^\star$ is the global minimum for the primal problem (\ref{eqn_sysPrimal}).  Hence, from (\ref{eqn_sysKKT}) we have
\BEQA \nonumber
&\D{\mathcal{C}}{x_j^i}(\sysState ^\star) - \lambda_j \D{(\sum_{i=1}^Q x^{\star_i}_j - x^\star_j)}{x_j^i} + h_j^i\ = \ 0 & \forall\ i,j \in \mathcal{Q} \\
\Rightarrow & \D{\mathcal{C}}{x_j^i}(\sysState ^\star) \ = \ \lambda_j + h_j^i \quad \forall\ i,j \in \mathcal{Q} \label{eqn_KKT}
\EEQA
We know from the definition of payoff (\ref{eqn_F_i_j}) that $ \D{\mathcal{C}}{x_j^i}(\sysState ) = F_j^i(\sysState )$.  Thus from (\ref{eqn_KKT}) we have
\BEQA
F_j^i(\sysState ^\star)  \ = \ \lambda_j + h_j^i & \forall\ i,j \in \mathcal{Q}
\EEQA
From (\ref{eqn_KKT2}), it follows that
\BEQA
F_j^i(\sysState ^\star)  \ = \ \lambda_j & \mbox{ when } x_j^{\star i} > 0 \ \forall\ i,j \in \mathcal{Q} \label{eqn_wdropEq1}\\
\mbox{and} \nonumber \\
F_j^i(\sysState ^\star)  \ = \ \lambda_j + h_j^i & \mbox{ when } x_j^{\star i} = 0 \ \forall\ i,j \in \mathcal{Q} \label{eqn_wdropEq2}
\EEQA
Now, consider the replicator dynamics (\ref{eqn_RepDyn}), at stationary point we have $\dot{x}_j^i = 0$.  Thus,
\BEQA
&\hat{F}_j = \  F_j^i(\hat{X}) \quad \forall\ i,j \in \mathcal{Q} \label{eqn_statPoint} \\
\mbox{or } &\hat{x}_j^i = 0,\nonumber \\
\mbox{where} &\nonumber \\
&\hat{F}_j \triangleq \frac{1}{\hat{x}_j}\sum_{r=1}^Q\hat{x}_j^r F_j^r(\hat{X}) \quad \forall\ j \in \mathcal{Q}, \label{eqn_Favg}
\EEQA
and $\hat{X}$ denotes a stationary point. The above equations imply that for mTracker $j$ the per unit cost of forwarding traffic to all the other mTrackers that it uses is the same. However, for an option $i$ that it does not use, the rate of forwarding $x_j^i$ is $0$ or equivalently, the cost is more than the average payoff.  Finally, we observe that the conditions required for Wardrop equilibrium are identical to the KKT first order conditions (\ref{eqn_wdropEq1})-(\ref{eqn_wdropEq2}) of the minimization problem (\ref{eqn_sysPrimal}) when
 \BEQA
\label{eqn_FeqLambda}\hat{F}_j = \lambda_j \quad \forall\ j \in \mathcal{Q},
\EEQA
which leads to the following theorem.
\begin{theorem}
The solution of the minimization problem in (\ref{eqn_sysPrimal}) is identical to the Wardrop equilibrium of the non-cooperative potential game $\mathcal{G}$. 
\end{theorem}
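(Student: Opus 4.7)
The plan is to prove the theorem by a direct correspondence between the KKT first-order conditions for the convex program (\ref{eqn_sysPrimal}) and the defining conditions of a Wardrop equilibrium of $\mathcal{G}$, leveraging the groundwork already laid in (\ref{eqn_sysDual})--(\ref{eqn_Favg}). The key structural fact is $\D{\mathcal{C}}{x_j^i}(\sysState) = F_j^i(\sysState)$, which makes the Lagrangian stationarity equations mirror the equal-payoff characterization of Wardrop equilibria.

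First I would verify that (\ref{eqn_sysPrimal}) is a convex optimization problem, so that its KKT conditions are both necessary and sufficient for global optimality. Each delay-cost summand $z/(C^i - z)$ is convex on $[0,C^i)$ (its second derivative is $2C^i/(C^i-z)^3 > 0$), and $z = \sum_l x^i_l$ is affine in the decision variables, so convexity is preserved under this composition. The transit-cost terms are linear and the constraints in (\ref{eqn_sysConstraints}) are affine, so Slater's condition holds trivially and global optimality is characterized exactly by (\ref{eqn_wdropEq1})--(\ref{eqn_wdropEq2}) together with $h_j^i \geq 0$.

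Next I would exhibit an explicit bijection between KKT multipliers and Wardrop payoffs. Given a Wardrop equilibrium $\hat{\sysState}$, let $\hat{F}_j$ denote the common payoff value (\ref{eqn_Favg}) attained by every strategy with $\hat{x}_j^i > 0$, and set $\lambda_j = \hat{F}_j$ and $h_j^i = F_j^i(\hat{\sysState}) - \hat{F}_j$. The Wardrop property guarantees $h_j^i \geq 0$ for every $i \in \mathcal{Q}$, with equality whenever $\hat{x}_j^i > 0$, which is precisely complementary slackness (\ref{eqn_KKT2}) together with (\ref{eqn_wdropEq1})--(\ref{eqn_wdropEq2}). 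Conversely, given any KKT point $\sysState^\star$ with multipliers $(\lambda, h)$, reading (\ref{eqn_wdropEq1})--(\ref{eqn_wdropEq2}) backwards shows that used strategies all achieve the common payoff $\lambda_j$ while unused strategies achieve payoff $\lambda_j + h_j^i \geq \lambda_j$, which is Definition~1 with $\hat{F}_j = \lambda_j$.

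The most delicate step is pinning down why the dual variable $\lambda_j$ actually coincides with the game-theoretic quantity $\hat{F}_j$: a priori $\lambda_j$ is only a Lagrange multiplier for mass-conservation, whereas $\hat{F}_j$ is an average payoff defined only after fixing a state. The convexity argument ensures both characterizations identify the same set of states, and the KKT stationarity equation $F_j^i = \lambda_j$ holding on the support of $\hat{\vec{x}}_j$ forces the identification (\ref{eqn_FeqLambda}). Once this is articulated, the two solution sets coincide and the theorem follows.
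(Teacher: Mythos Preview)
Your proposal is correct and follows essentially the same route as the paper: both arguments rest on the identity $\D{\mathcal{C}}{x_j^i}=F_j^i$, match the KKT conditions (\ref{eqn_wdropEq1})--(\ref{eqn_wdropEq2}) against the Wardrop equal-payoff characterization via the identification $\lambda_j=\hat{F}_j$, and close the argument by invoking convexity of $\mathcal{C}$ to eliminate any duality gap. You are somewhat more explicit than the paper in verifying convexity, checking Slater's condition, and spelling out both directions of the correspondence, but the underlying strategy is identical.
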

\begin{proof}
From the above discussion we know that the KKT conditions of (\ref{eqn_sysPrimal}) satisfy the Wardrop equilibrium conditions of the game $\mathcal{G}$. 
%
Thus, to finish this proof all we need to show is that there is no duality gap between the primal (\ref{eqn_sysPrimal}) and the dual (\ref{eqn_sysDual}) problems.   This follows immediately from convexity of the total system cost.
\end{proof}

\section{mTracker: Admission Control}\label{sec:mTrAdmission}

In the previous section we witnessed how each mTracker tries to reduce the cost in its peer cloud by forwarding requests to other mTrackers.  However, minimizing the total delay does not mean that it is bounded.  In order to ensure acceptable delay performance, we provide admission control functionality to each mTracker.  The mTracker takes admission control decisions in the medium time scale; the mTracker's demand splitting is assumed to have converged to yield the lowest cost split at every instant at this time scale. In some ways the admission control mode supplements natural market dynamics---if the delay experienced by requesters were unbearably high, they would simply abort, causing the system to recover.  However, such dynamics might cause large swings in quality over time; the mTracker's admission control precludes the occurrence of such swings.  

We could formulate an admission control problem, enforcing hard constraints on the acceptable system cost, as a convex optimization problem shown below:
\vspace{-0.1in}
\BEQA 
\max_{\vec{x}}&\sum_{j=1}^Q w_j\log{x_j}\label{tTrack_CnvxOpt}\\
\mbox{subject to, } & \mathcal{C}^\star(\vec{x}) \leq \kappa  \\
& x_j \geq 0 \nonumber
\EEQA
where $\vec{x}$ is the load vector and $\mathcal{C}^\star(\vec{x})$ is the minimum value of the optimization problem (\ref{eqn_sysPrimal}) for a given load $\vec{x}$.   We can easily show that the constraint set is (\ref{tTrack_CnvxOpt}) is convex. 
\begin{lemma}
The set of all load vectors $\vec{x}$, satisfying the inequality constraint, $\mathcal{C}^\star(\vec{x}) \leq \kappa$ is convex.
\end{lemma}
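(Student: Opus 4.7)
The plan is to prove convexity of the set by establishing that $\mathcal{C}^\star(\vec{x})$ is a convex function of the load vector $\vec{x}$; once this is shown, the set in question is simply the sublevel set $\{\vec{x} : \mathcal{C}^\star(\vec{x}) \leq \kappa\}$ of a convex function, which is automatically convex.

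To show convexity of $\mathcal{C}^\star$, I would take two load vectors $\vec{x}^{(1)}, \vec{x}^{(2)}$ and let $\sysState^{(1)}, \sysState^{(2)}$ be the corresponding minimizers of the inner problem (\ref{eqn_sysPrimal}). For $\theta \in [0,1]$, set $\vec{x}^{(\theta)} = \theta \vec{x}^{(1)} + (1-\theta)\vec{x}^{(2)}$ and $\sysState^{(\theta)} = \theta \sysState^{(1)} + (1-\theta)\sysState^{(2)}$. Because the equality constraints $\sum_i x_j^i = x_j$ and the non-negativity constraints $x_j^i \ge 0$ are linear, $\sysState^{(\theta)}$ is a feasible split for $\vec{x}^{(\theta)}$. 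Then I would use joint convexity of $\mathcal{C}(\sysState)$ to write
\begin{equation*}
\mathcal{C}^\star(\vec{x}^{(\theta)}) \;\le\; \mathcal{C}(\sysState^{(\theta)}) \;\le\; \theta\,\mathcal{C}(\sysState^{(1)}) + (1-\theta)\,\mathcal{C}(\sysState^{(2)}) \;=\; \theta\,\mathcal{C}^\star(\vec{x}^{(1)}) + (1-\theta)\,\mathcal{C}^\star(\vec{x}^{(2)}),
\end{equation*}
which is the desired convexity of $\mathcal{C}^\star$.

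The one non-trivial step is verifying joint convexity of $\mathcal{C}(\sysState)$ in $\sysState$, which I would do term by term in (\ref{eqn_Lyap}). The transit cost terms $\sum_r p_r^i x_r^i$ are linear in $\sysState$, hence convex. For the delay contribution at mTracker $i$, write $z^i(\sysState) = \sum_l x_l^i$, which is a linear function of $\sysState$. Then the term equals $g(z^i(\sysState))$ where $g(z) = z/(C^i - z)$ on $[0, C^i)$. A direct computation gives $g''(z) = 2C^i/(C^i-z)^3 > 0$, so $g$ is convex and non-decreasing on its domain, and the composition with a linear map preserves convexity. Summing convex functions yields joint convexity of $\mathcal{C}$.

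The main obstacle, such as it is, is to be careful about the effective domain: the delay term blows up when $z^i \to C^i$, so I should note that the feasible region is restricted to loads for which the inner problem is finite (i.e., $\sum_l x_l^i < C^i$ at the optimum), and on this domain the convexity argument goes through cleanly. With that caveat in place, the sublevel-set argument closes the lemma in one line.
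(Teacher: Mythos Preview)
Your proposal is correct and follows essentially the same argument as the paper: both take optimal splits $\sysState^{(1)}, \sysState^{(2)}$ for the two load vectors, observe that their convex combination is feasible for the combined load, and invoke convexity of $\mathcal{C}(\sysState)$ to bound $\mathcal{C}^\star$ at the combined load. The only differences are cosmetic: you prove the slightly stronger statement that $\mathcal{C}^\star$ itself is convex (and then pass to the sublevel set), and you explicitly verify the joint convexity of $\mathcal{C}(\sysState)$ term by term, whereas the paper simply asserts convexity of $\mathcal{C}$ and works directly with the sublevel set.
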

\begin{proof}
Let $\vec{x}$ and $\vec{y}$ be two load vectors such that,
\BEQA
\mathcal{C}^\star(\vec{x}) \leq \kappa \mbox{  and  } \mathcal{C}^\star(\vec{y}) \leq \kappa \label{eqn:xystar}
\EEQA
Let $X_{min}$ and $Y_{min}$ be the states, corresponding to load vectors $\vec{x}$ and $\vec{y}$ respectively, which results in minimum cost to the system, i.e., 
\BEQA
\mathcal{C}(X_{min}) = \mathcal{C}^\star(\vec{x}) \mbox{  and  }
\mathcal{C}(Y_{min}) = \mathcal{C}^\star(\vec{y}) \label{eqn:xymin}
\EEQA
Consider,
\BEQA
\mathcal{C}(\alpha X_{min}+(1-\alpha)Y_{min}) \leq \alpha \mathcal{C}(X_{min})+ (1 -\alpha)\mathcal{C}(Y_{min})
\EEQA
the above inequality follows from the convexity of $\mathcal{C}(X)$. \\
Using Eqns(\ref{eqn:xystar}) and (\ref{eqn:xymin}), we get:
\BEQA
\mathcal{C}(\alpha X_{min}+(1-\alpha)Y_{min})\leq & \alpha \mathcal{C}^\star(\vec{x})+(1-\alpha)\mathcal{C}^\star(\vec{y})\\
\leq & \alpha \kappa +(1-\alpha) \kappa \\
\leq & \kappa \label{eqn:C_xmin_ymin}
\EEQA
if $\vec{z} = \alpha \vec{x} + (1 - \alpha )\vec{y}$, then from the definition of $\mathcal{C}^\star$
\BEQA
\mathcal{C}(Z_{min}) &=&\mathcal{C}^\star(\vec{z})
\mathcal{C}^\star(\vec{z}) = \mathcal{C}(Z_{min})
\EEQA
where $Z_{min}$ is the state of the system, corresponding to load $\vec{z}$, when the cost is minimum.

Clearly we can represent any state $Z$, corresponding to the load vector $\vec{z}$, in the form of $\alpha X + (1-\alpha )Y$, and thus it follows from the definition of $\mathcal{C}^\star$ and Eqn(\ref{eqn:C_xmin_ymin}) that:
\BEQA
\mathcal{C}^\star (\alpha \vec{x} + (1- \alpha )\vec{y}) \leq \mathcal{C}(\alpha X_{min} + (1- \alpha )Y_{min}) \leq \kappa
\EEQA
Thus the set is convex.
\end{proof}

If we think of $\sum_jw_j\log{x_j}$ as the total system utility, then $\mathcal{C}^\star(\vec{x})$ is the total system disutility.  Instead of hard constraints on the cost, we relax the problem after the manner of \cite{kel00,shaksri08} to simply ensure that the difference of utility and disutility (the \emph{net utility}) is as large as possible.  
\BEQA\label{eqn_relaxed}
\max_{\vec{x}}&\sum_{j=1}^Q w_j\log{x_j} - \mathcal{C}^\star(\vec{x}) \\
\mbox{subject to, } & x_j \geq 0 \nonumber
\EEQA
A gradient ascent type controller that could be used to solve the above problem is
\BEQA
\dot{x}_j = w_j - x_j \D{\mathcal{C}^\star}{x_j}  \forall\ j \in \mathcal{Q}
\EEQA
To determine the second term above, we use
\BEQA 
\D{\mathcal{C}^\star}{x_j} = \hspace{2.7in}\nonumber \\
\sum_{i \in \mathcal{Q}} \D{\mathcal{C(\sysState ^\star )}}{x^i_j} \D{x^i_j}{x_j} + \sum_{k \in \mathcal{Q}, k \neq j} \sum_{i \in \mathcal{Q}} \D{\mathcal{C(\sysState ^\star )}}{x^i_k} \D{x^i_k}{x_j}  \forall\ j \in \mathcal{Q}\label{eqn:gradC_star}\\
= \sum_{i \in \mathcal{Q}}F^i_j(\sysState ^\star)\D{x^i_j}{x_j} + \sum_{k \in \mathcal{Q}, k \neq j} \sum_{i \in \mathcal{Q}} F^i_k (\sysState ^\star)\D{x^i_k}{x_j} \forall\ j \in \mathcal{Q}\label{eqn:f_star}
\EEQA
When the system is in Wardrop equilibrium ($\sysState ^\star$) all the options $i$ used by $j$ yield the same payoff hence, $F^i_j(\sysState ^\star) = F_j^\star$. For the options $r$ that were not used $\D{x^r_j}{x_j} =0$ at state $\sysState ^\star.$  Further, $\sum_{i \in \mathcal{Q}}x^i_j=x_j$ and hence $\sum_{i \in \mathcal{Q}}\D{x^i_j}{x_j}=1$.  Also, $\sum_{i \in \mathcal{Q}}\D{x^i_k}{x_j}=0\ \forall k \neq j.$   Thus, we just have
%
%
\BEQA
\D{\mathcal{C}^\star}{x_j} = F_j^\star  & \forall\ j \in \mathcal{Q}
\EEQA
and the controller equation is given as:
\BEQA\label{eqn_ctrlPrimal}
\dot{x}_j = (w_j - x_j F_j^\star ) & \forall\ j \in \mathcal{Q}.
\EEQA

Under this admission control loop, we have the following theorem.

\begin{theorem}
Under the time scale separation assumption, the mTracker system with dynamics (\ref{eqn_ctrlPrimal}) is globally asymptotically stable.
\end{theorem}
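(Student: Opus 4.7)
The plan is to use standard Lyapunov theory with the (negative of the) relaxed objective in (\ref{eqn_relaxed}) as the Lyapunov function. Specifically, I would define
\begin{equation}
V(\vec{x}) = \mathcal{C}^\star(\vec{x}) - \sum_{j=1}^Q w_j \log x_j,
\end{equation}
and show that $V$ is a valid Lyapunov function for the dynamics (\ref{eqn_ctrlPrimal}). The time-scale-separation assumption is what lets me treat $\mathcal{C}^\star(\vec{x})$ and $F_j^\star(\vec{x})$ as well-defined functions of $\vec{x}$ at the medium time scale: for each admission-control state $\vec{x}$, Theorem~\ref{thm:repStable} guarantees the fast traffic-management layer has already converged to the cost-minimizing $\sysState^\star$ with common marginal cost $F_j^\star$.

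First I would argue that $V$ has a unique interior minimum. The term $-\sum_j w_j \log x_j$ is strictly convex in $\vec{x}$ (and blows up at the boundary $x_j \to 0^+$), and $\mathcal{C}^\star(\vec{x})$ is convex in $\vec{x}$ (it is the value function of a convex program with linear constraints parametrized by $\vec{x}$, as essentially shown in the preceding lemma). Hence $V$ is strictly convex and coercive, so it attains a unique minimizer $\vec{x}^{\,*}$ in the positive orthant, characterized by the first-order condition $w_j/x_j^{*} = \partial \mathcal{C}^\star/\partial x_j\, (\vec{x}^{\,*}) = F_j^\star(\vec{x}^{\,*})$ for every $j$, which is exactly the equilibrium of (\ref{eqn_ctrlPrimal}).

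Next I would compute $\dot V$ along trajectories. Using the key identity $\partial \mathcal{C}^\star/\partial x_j = F_j^\star$ already derived just before (\ref{eqn_ctrlPrimal}),
\begin{equation}
\dot V = \sum_{j=1}^Q \left(F_j^\star - \frac{w_j}{x_j}\right)\dot x_j = \sum_{j=1}^Q \left(F_j^\star - \frac{w_j}{x_j}\right)\left(w_j - x_j F_j^\star\right).
\end{equation}
Factoring $x_j$ out of the second factor gives
\begin{equation}
\dot V = -\sum_{j=1}^Q x_j \left(F_j^\star - \frac{w_j}{x_j}\right)^{\!2} \le 0,
\end{equation}
with equality iff $w_j = x_j F_j^\star$ for all $j$, i.e.\ precisely at $\vec{x}^{\,*}$. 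Combined with the strict convexity and coercivity of $V$, this yields global asymptotic stability of $\vec{x}^{\,*}$ by Lyapunov's theorem (or LaSalle's invariance principle applied to the largest invariant set in $\{\dot V = 0\}$, which is just $\{\vec{x}^{\,*}\}$).

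The main obstacle I anticipate is the rigorous justification of differentiating through $\mathcal{C}^\star$. The derivation leading to $\partial \mathcal{C}^\star/\partial x_j = F_j^\star$ in the text implicitly assumes differentiability of the minimizer $\sysState^\star$ as a function of $\vec{x}$ and that the set of used strategies $\hat{\mathcal{Q}}_j$ is locally constant. A cleaner route is an envelope-theorem argument: by the KKT conditions (\ref{eqn_wdropEq1})--(\ref{eqn_wdropEq2}), the Lagrange multiplier $\lambda_j(\vec{x})$ associated with the constraint $\sum_i x_j^i = x_j$ equals the sensitivity $\partial \mathcal{C}^\star/\partial x_j$, and (\ref{eqn_FeqLambda}) identifies this with $F_j^\star$. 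Once that identification is secured, the rest of the proof is the short computation above.
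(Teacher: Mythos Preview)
Your proposal is correct and matches the paper's approach essentially line for line: the paper uses the Lyapunov function $Z(\vec{x})=V(\hat{\vec{x}})-V(\vec{x})$ with $V(\vec{x})=\sum_j w_j\log x_j-\mathcal{C}^\star(\vec{x})$, which is your $V$ up to sign and an additive constant, and obtains $\dot Z=-\sum_j \dot x_j^{\,2}/x_j$, algebraically identical to your $-\sum_j x_j(F_j^\star-w_j/x_j)^2$. Your additional remarks on coercivity/uniqueness and the envelope-theorem justification of $\partial\mathcal{C}^\star/\partial x_j=F_j^\star$ are sound refinements that the paper leaves implicit.
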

\begin{proof}
We use the following Lyapunov function
\BEQA
 Z(\vec{x}) &=& V(\hat{\vec{x}}) - V(\vec{x}) \label{eqn_admCtrlLyap} \\
\mbox{where } V(\vec{x})&=&\sum_{j=1}^Q w_j\log{x_j} - \mathcal{C}^\star(\vec{x}) \label{eqn_admCtrlV}
\EEQA
which is strictly concave, with $\hat{\vec{x}}$ is its unique maximum.  Differentiating $Z(\vec{x})$ we obtain
\BEQA\label{eqn_Zdot}
\dot{Z} = - \sum_{j=1}^Q \D {V}{x_j}\dot{x_j}.
\EEQA
Then from (\ref{eqn_admCtrlV}) and (\ref{eqn_ctrlPrimal})
\BEQA
\D {V}{x_j} = \frac{w_j}{x_j} - \D{\mathcal{C}^\star(\vec{x})}{x_j} &= \frac{\dot{x}_j}{x_j} \\
\therefore\ \dot{Z} = -\sum_{j=1}^Q \frac{\dot{x}_j^2}{x_j} \leq 0 & \forall\ \vec{x},
\EEQA
with $\dot{Z} = 0$ when the system is in equilibrium. Thus, the system is globally asymptotically stable \cite{khalil96}.
\end{proof}

Finally, we note that the equilibrium conditions of the controller (\ref{eqn_ctrlPrimal}) are the same as the KKT conditions of the convex optimization problem (\ref{eqn_relaxed}).  Hence, the controller succeeds in maximizing the required net utility.

\section{Matlab Simulations}\label{sec:matlab_sims}
Figure~\ref{fig:mTrack2Payoff} shows the per unit payoffs, corresponding to T2.  As expected, the per unit payoffs converge to identical values.
\begin{figure}[htbp]
\vspace{-0.15in}
\begin{center}
\includegraphics[width=\figSize]{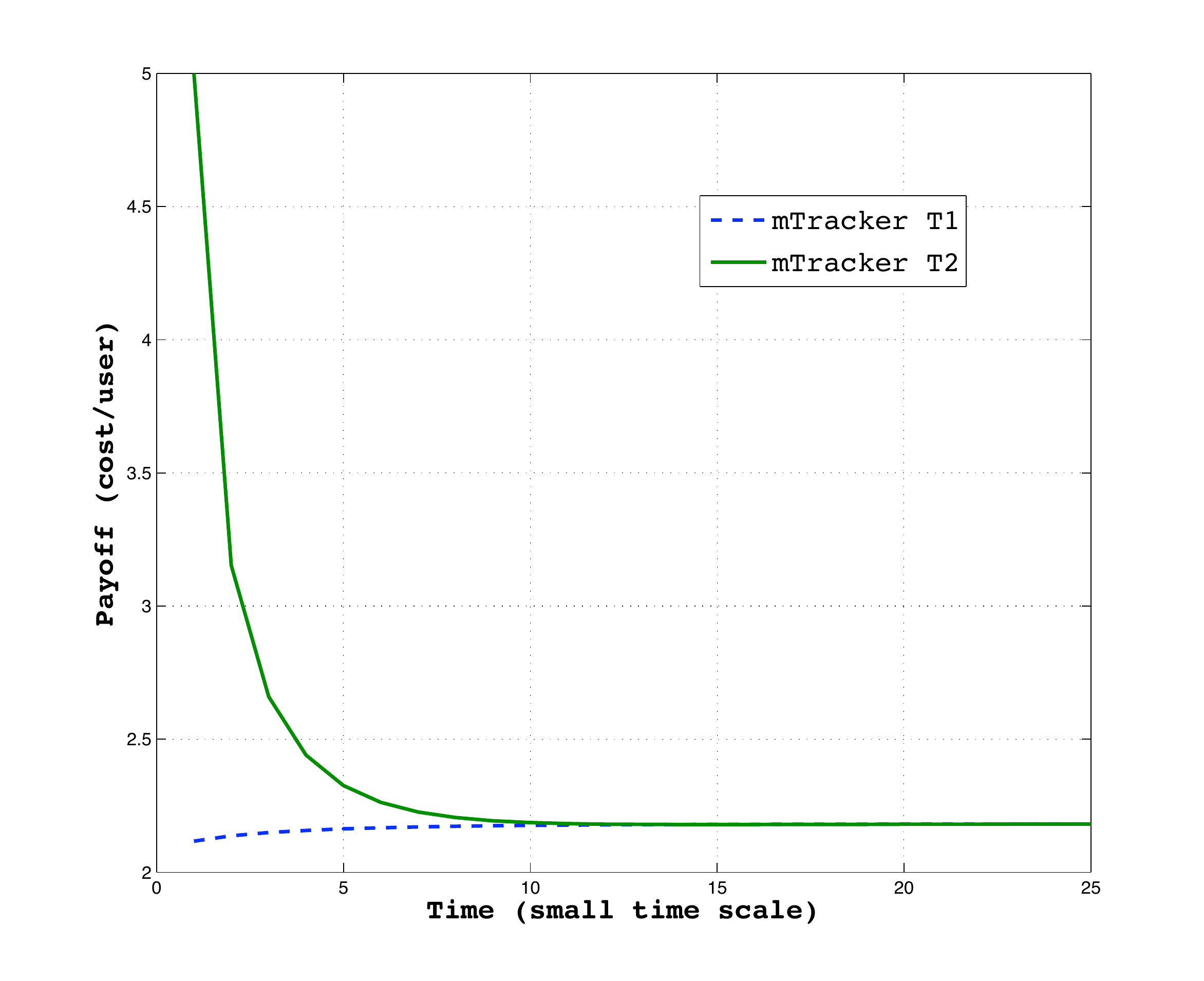}
\figCaptionSpace
\caption{The trajectory of payoffs of mTracker T2 for the 2 options available (local swarm and T1's swarm).  The payoffs eventually equalize, showing that a Wardrop equilibrium has been attained.}
\label{fig:mTrack2Payoff}
\end{center}
\figSpace
\end{figure}
We perform simulations on the simple overlay topology illustrated in Figure~\ref{fig:multitrack}.  Our objective is to validate our analytical results, and use the resulting insights to construct a realistic ns-2 implementation in the next section.  Our system consists of $3$ mTrackers (T1,T2 and T3).  The mTracker-T1 is assumed to be in steady state (i.e. it has more capacity than demand in its peer swarm) and the other mTrackers T2 and T3 are in a transient state.   Thus, T2 and T3 may forward traffic to T1.  Our simulation parameters are chosen as follows.  The initial arrival rates at the mTrackers are $x_1=10$ users/time, $x_2=20$ users/time and $x_3=20$ users/time, while the available capacities (fixed) are $C^1=30$ users/time, $C^2=20$ users/time and $C^3=20$ users/time, respectively.  There is a transit price for traffic forwarding between mTrackers, and these values are chosen as $p^1_2 = 2$ unit and $p^1_3 = 1$ unit.   

\begin{figure}[htbp]
\vspace{-0.1in}
\begin{center}
\includegraphics[width=\figSize]{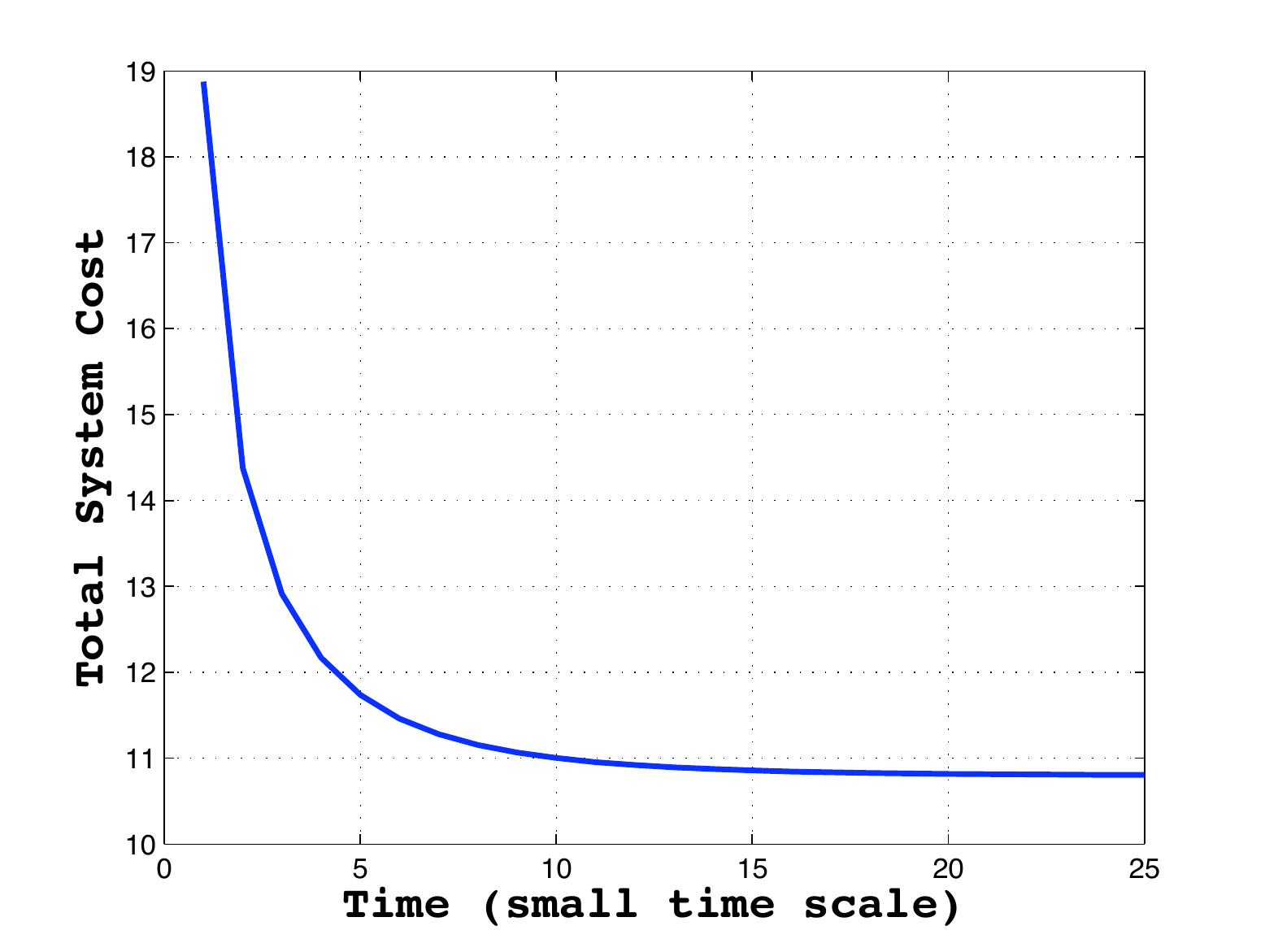}
\figCaptionSpace
\caption{The trajectory of total system cost in the system.  As expected, it decreases over time to a minimum.}
\label{fig:mTracSysDelay}
\end{center}
\vspace{-0.15in}
\end{figure}
We first validate the dynamics of the mTrackers at the small time scale.  Hence, the arrival rate at each mTracker remains fixed, and as in Section~\ref{sec:mTracker} and they each use replicator dynamics in order to balance their payoffs among available options.  We expect that (i) the per unit payoff for all available options to an mTracker should eventually be equal, and (ii) the total cost of the system would decrease to a minimum.  
We observe similar convergence for mTracker T3.  We then plot the trajectory of total system cost $\mathcal{C}(\sysState)$\footnote{Recall that this is the sum of total delay plus total transit cost.} in Figure~\ref{fig:mTracSysDelay}.  As expected it decreases with time, and converges to a minimum. 

\begin{figure}[htbp]
\vspace{-0.15in}
\begin{center}
\includegraphics[width=\figSize]{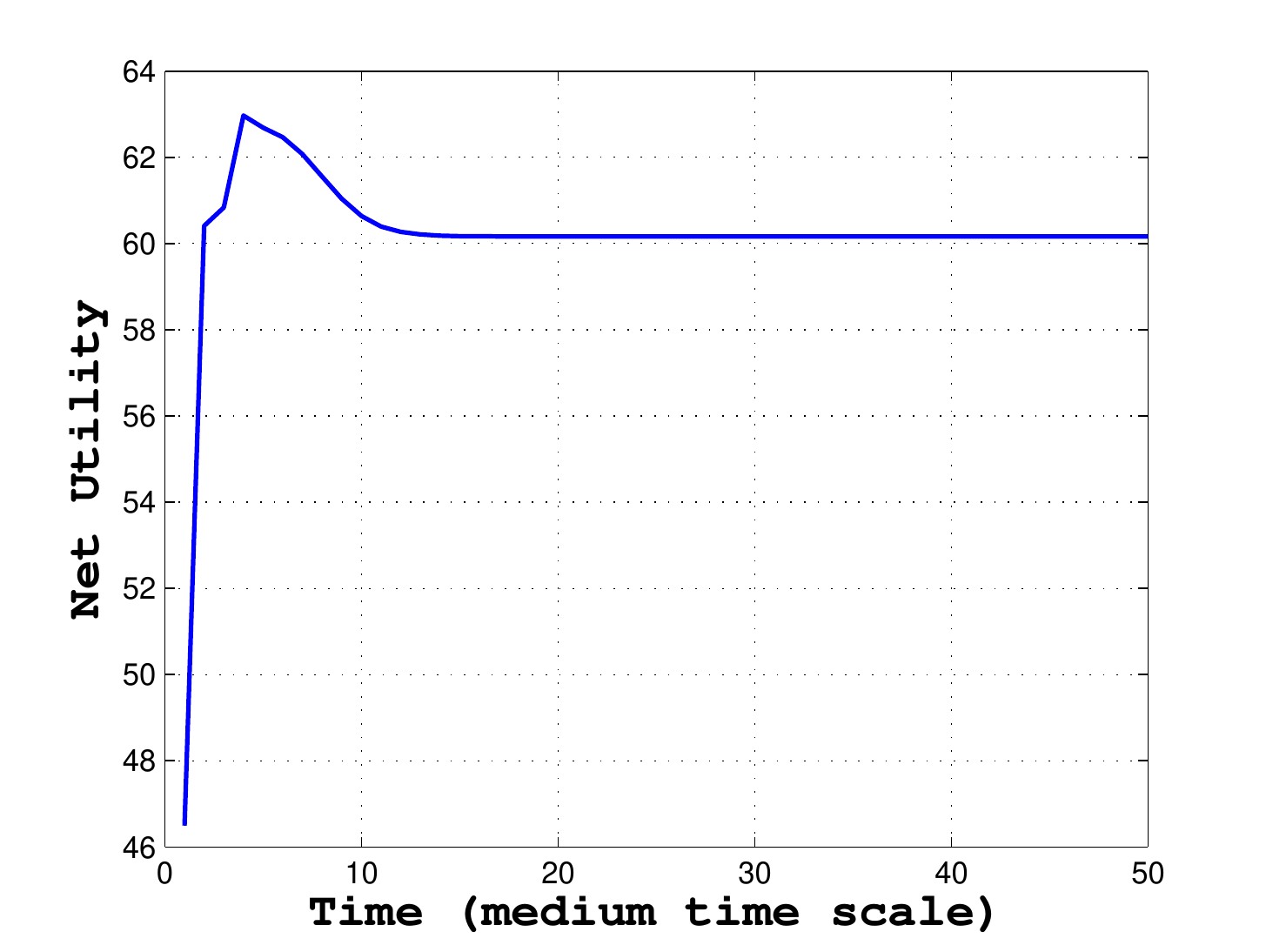}
\caption{The trajectory of net utility of the system when mTracker uses admission control.  The net utility converges to a maximum.}
\label{fig:tTracNetUtil}
\end{center}
\vspace{-0.15in}
\end{figure}
We performed simulations at the medium time scale for the admission control module and observed that the net utility of the system (as defined in (\ref{eqn_relaxed}) converges, shown in Figure~\ref{fig:tTracNetUtil}.

While our Matlab simulations suggest that our system design is valid, they do not capture the true P2P interactions within each peer-cloud.  In the next section, we implement MultiTrack using ns-2 in order to experiment with a more accurate representation of the system.

\section{ns-2 Experiments} \label{sec:ns2_sims}

We implemented the MultiTrack system on ns-2 to observe its performance in a more realistic setting.  Again, we use the same network shown in  Figure \ref{fig:multitrack}, with $3$ \textit{mTrackers} T1,T2 and T3.  However, we now explicitly model peer behavior using a BitTorrent model.  We use a flow level BitTorrent model developed by Eger \textit{et al.}, \cite{egeretal_ns2}\footnote{Here only flows are simulated, and the actual dynamics of transport protocol, like TCP, and network protocol, like IP, are ignored in the interest of lowered simulation time.}, and each peer leaves the system after completing its download.  We extended the existing BitTorrent Tracker model to support mTracker functionality.  


We estimate the delay and congestion price at each mTracker during every small time scale as follows:
\begin{enumerate}
\item {\bf Delay}: The per unit delay in each mTracker's peer cloud is measured by calculating the average download rate obtained by the peers in the current time slot, including the peers that finished service during this time slot. The delay experienced is the reciprocal of this download rate. We maintain an exponential moving average of the delay with $75\%$ weight to the delay in current time slot and $25\%$ weight to the previous value
\item {\bf Congestion Price}: The congestion price with delay $D$ and arrival rate $z$ is given as $ \D{D}{z}\times z,$ which follows from the \emph{elasticity} (\ref{eqn:elasticity}). We measure the change in delay and change in arrival rate from the previous and current time slot to calculate the congestion price.
\end{enumerate}

In our simulation, each peer has an upload capacity of $3000 \ kbps$ and their download capacity is not restricted. The requested file size is $5 \ MB$ and each chunk has a size of $256 \ kB$.  Peer arrivals are created according to Poisson processes of different rates.  T1 has 200 seeds in its peer swarm while T2 and T3 have 5 seeds each. We fix the initial arrival rates to be $x_1 =3$ users/sec, $x_2=5$ users/sec and $x_3=7$ users/sec, set transit costs to be $p^1_2=20$ and $p^1_3=10$. 
\begin{figure}[htbp]
\vspace{-0.15in}
\begin{center}
\includegraphics[width=\figSize]{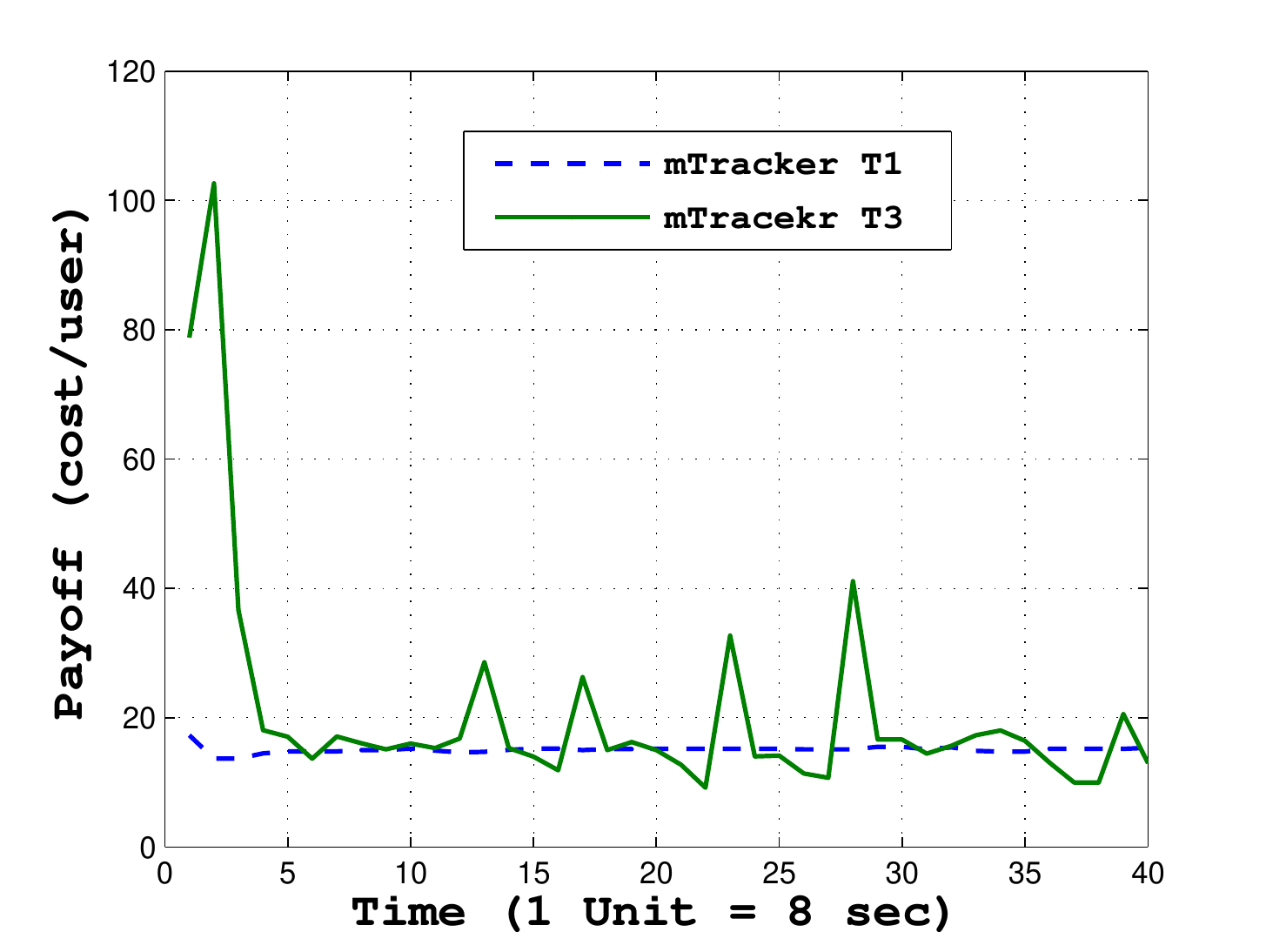}
\figCaptionSpace
 \caption{The trajectory of payoffs of mTracker T3 for the 2 options available (local swarm and T1's swarm).  The payoffs eventually equalize, showing that a Wardrop equilibrium has been attained.\label{fig:ns_mTrack_T3}}
\end{center}
\vspace{-0.17in}
\end{figure}

We first simulate the mTracker with admission control disabled so as to show the convergence of our mTracker traffic management module. We set the update interval for the mTracker to be $8$ \textit{sec}.  Thus, each mTracker calculates the splitting probabilities for the different options at this frequency. We simulate the system for $320$ \textit{sec}.

First we show the payoff convergence of the transient mTrackers. Figure~\ref{fig:ns_mTrack_T3}, shows the convergence of payoffs of mTracker T3 for its two options, local swarm and T1's swarm, thus showing that the system attains Wardrop equilibrium. We observed similar payoff convergence for other mTrackers.

Next we plot the total cost (transit price and delay cost) of our MultiTrack system. The temporal evolution of cost is shown in Figure \ref{fig:ns2:low_cost_sn}.  The impact of using MultiTrack is clearly illustrated here.  The system without traffic splitting has a high cost due to increased user delays, while traffic splitting without regard to prices has a high cost due to excessive transit traffic.  MultiTrack takes both transit price and user delay into account, and hence achieves the lowest possible cost.  

We implemented the mTracker's admission control module in ns-2. Here, at each time step the mTracker decides the admission rate (based on the dynamics developed in Section \ref{sec:mTrAdmission}).  Admission control is done at $40$ \textit{sec} intervals, while the traffic management module is run at $8$ \textit{sec} intervals during this interval.  
We simulate the system for $2000$ \textit{sec}. We expect that the net utility of the system (as defined in (\ref{eqn_relaxed})) would increase to a maximum, which is what we observe in Figure~\ref{fig:ns2_netUtil}.

\begin{figure}[htbp]
\vspace{-0.2in}
\begin{center}
\includegraphics[width=\figSize]{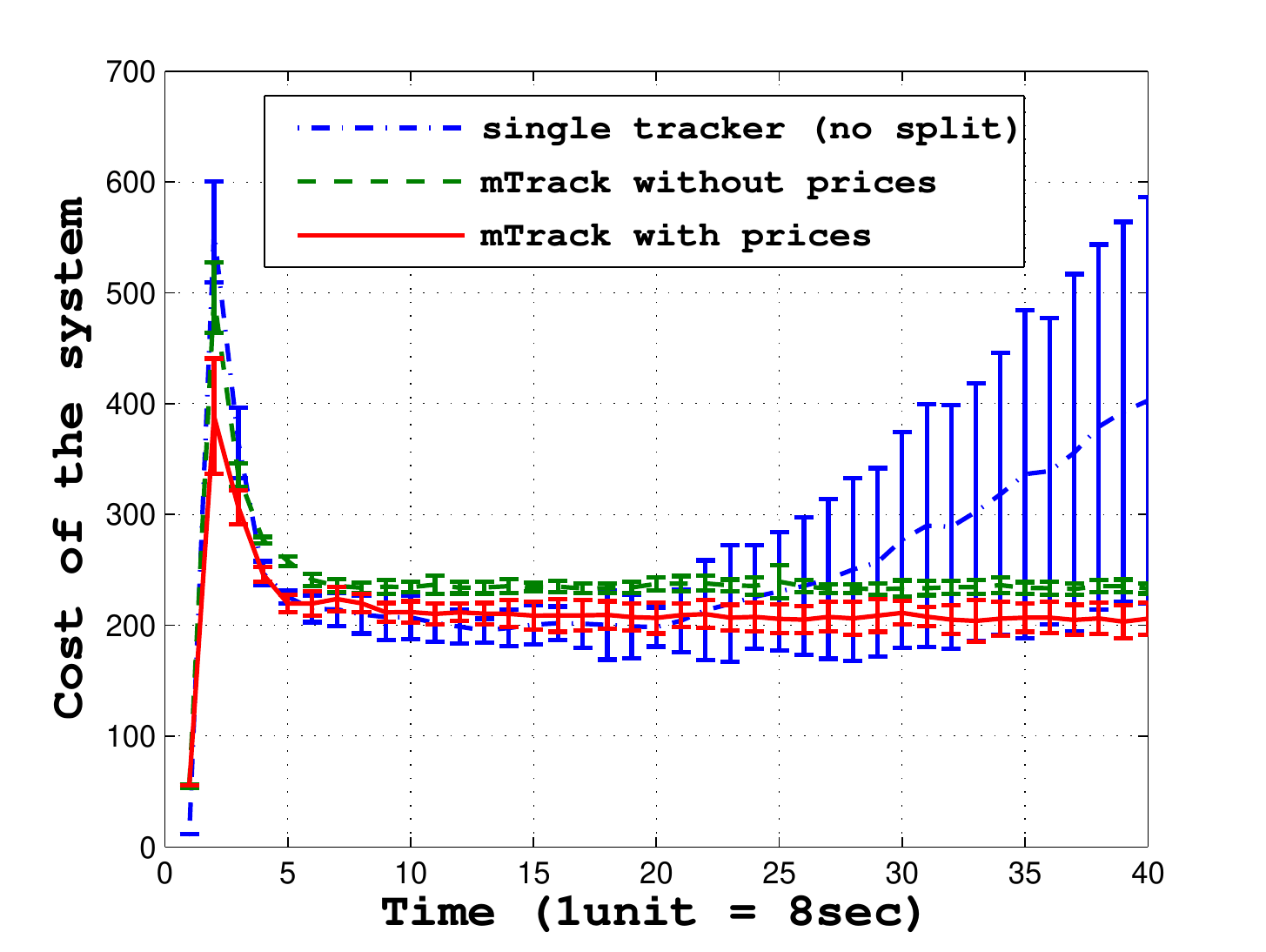}
\figCaptionSpace
 \caption{The trajectory of total system cost.  Without traffic splitting, the cost (delay plus transit price) is high.  With traffic splitting without regard to price, the delay is low but transit price is high, causing high cost.  MultiTrack takes prices and delays into account, and has lowest total cost.}
\label{fig:ns2:low_cost_sn}
\end{center}
\vspace{-0.2in}
\end{figure}
%
\begin{figure}[htbp]
\vspace{-0.2in}
\begin{center}
\includegraphics[width=\figSize]{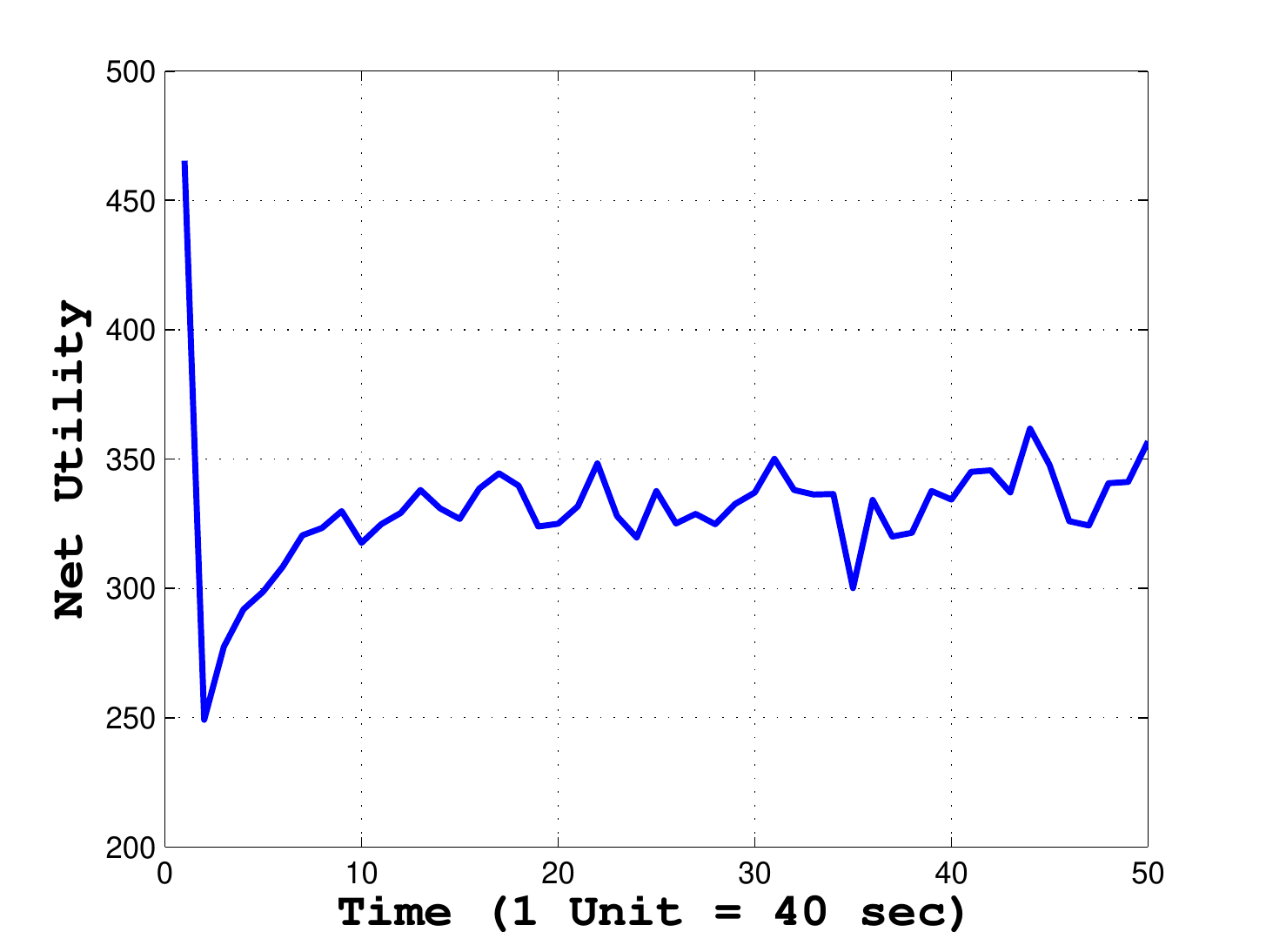}
\figCaptionSpace
 \caption{The trajectory of net utility of the system when mTracker uses admission control.}
\label{fig:ns2_netUtil}
\end{center}
\figSpace
\end{figure}

In Figure~\ref{fig:ns_AC_arrv} we can see the convergence of arrival rates of each mTracker, thus finding the optimum arrival rate into each mTracker for a fixed capacity.  Since all mTrackers have identical utility, we see that T1 dominates as its price to access its own (resource rich) swarm is zero.  Finally, we note that changing the time scales for faster responses does not seem to unduly impact stability.  In particular, reducing the small time scale from $8$ to $4$ \emph{sec} does not appreciably change our results.

\begin{figure}[htbp]
\vspace{-0.045in}
\begin{center}
\includegraphics[width=\figSize]{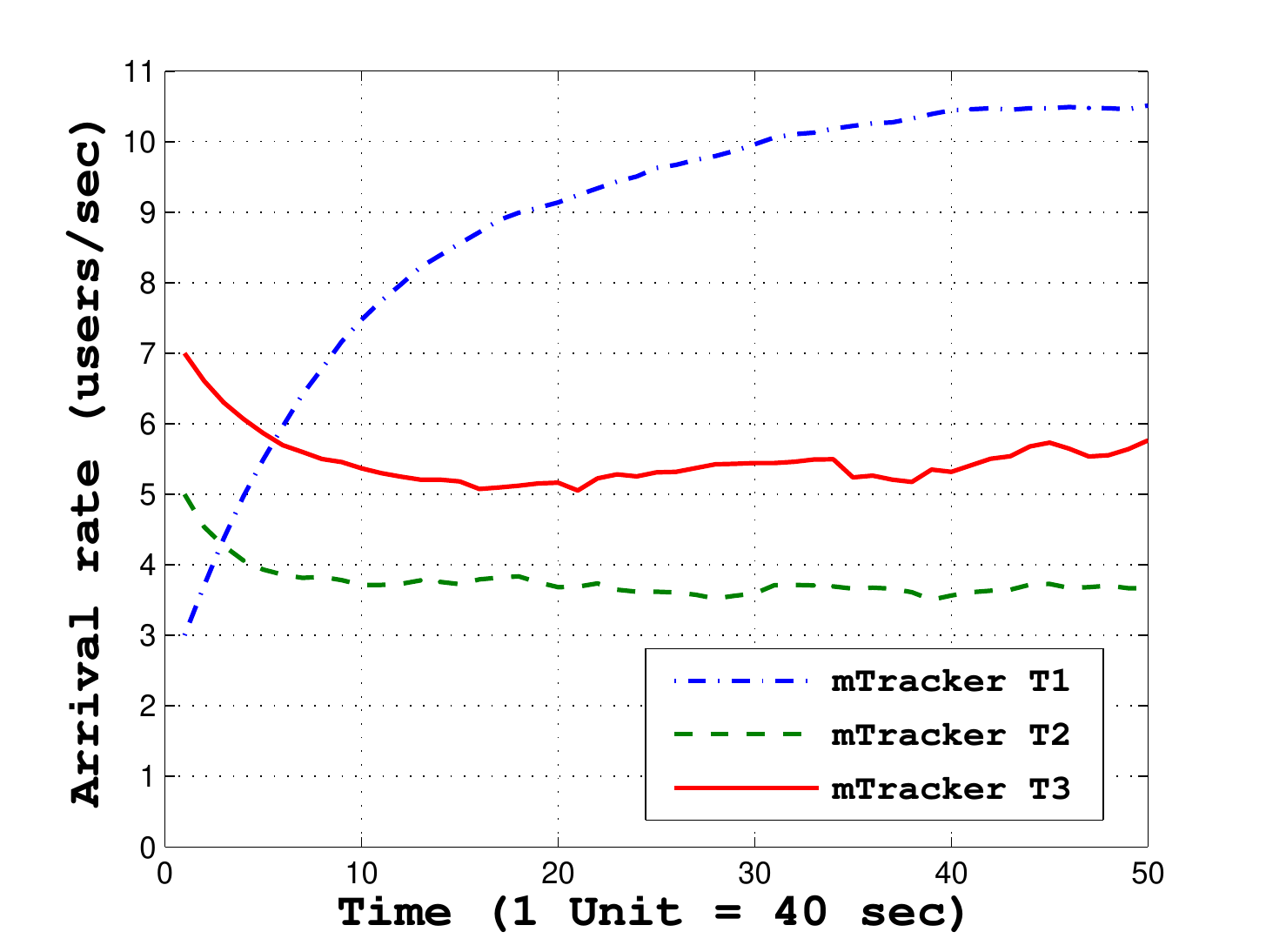}
\figCaptionSpace
 \caption{The trajectory of arrival rates for all mTrackers. The utility of each mTracker is weighed equally to a value of $10$.}\label{fig:ns_AC_arrv}
\end{center}
\vspace{-0.3in}
\end{figure}

\section{Conclusions}\label{sec:conclusion}

As the popularity of P2P systems has grown, it has become clear that aligning incentives between the system performance in terms of the user QoS, and the transit costs faced by ISPs will be increasingly important.  Fundamental to this problem is the realization that resources may be distributed geographically, and hence the marginal performance gain obtained by accessing a resource is offset in part by the marginal cost of transit in accessing it.  In this paper, we consider delay and transit costs as two dimensions and attempt to design a system---MultiTrack--that attains an optimal operating point.

Our system consists of mTrackers that form an overlay network among themselves and act as gateways to peer-clouds.  The load balancing module takes decisions based on whether the marginal decrease in delay obtained by forwarding a user to a resource rich peer-cloud is offset by the marginal increase in its transit cost.  We show that a simple price-based controllers can ensure that the total system cost is minimized in spite of each mTracker being selfish. The admission control module calculates the tradeoff between the marginal utility in increasing the admission rate in a particular ISP domain to the marginal increase in system cost, to decide the admission rates into that ISP domain.  It thus allows the correct arrival rate of users into the system to attain optimal performance.


We validated our system design using Matlab simulations, and implemented the system on ns-2 to conduct more realistic experiments.  We showed that our system significantly outperforms a system in which costs are the only control dimension (localized traffic only).  In the future, we will conduct testbed experiments on MultiTrack in a real-world setting.




%

\bibliographystyle{IEEEtran}
\bibliography{p2p}  

\end{document}